\newtheorem{theorem}{Theorem}
\newtheorem{lemma}{\textbf{Lemma}}[section]
\newtheorem{remark}{\textbf{Remark}}[section]
\newtheorem{example}{\textbf{Example}}[section]
\newcommand{\tabincell}[2]{\begin{tabular}{@{}#1@{}}#2\end{tabular}}
\newcommand{\F}{\mathbb{F}}
\begin{document}

\baselineskip 17pt
\title{\Large\bf New MDS Self-dual Codes over Finite Fields of Odd Characteristic}

\author{\large  Xiaolei Fang\quad \quad Khawla Lebed \quad\quad Hongwei Liu \quad\quad Jinquan Luo\footnote{The authors are with School of Mathematics
and Statistics \& Hubei Key Laboratory of Mathematical Sciences, Central China Normal University, Wuhan China.\newline
 E-mail: fangxiaolei@mails.ccnu.edu.cn(X.Fang), labad.k@yahoo.com(K.Lebed),
hwliu@mail.ccnu.edu.cn(H.Liu), luojinquan@mail.ccnu.edu.cn(J.Luo)}}
\date{}
\maketitle

{\bf Abstract}:
In this paper, we produce new classes of MDS self-dual codes via (extended) generalized Reed-Solomon codes over finite fields of odd characteristic.
Among our constructions, there are many MDS self-dual codes with new parameters which have never been reported. For odd prime power $q$ with $q$ square,
the total number of lengths for MDS self-dual codes over $\mathbb{F}_q$ presented in this paper is much more than those in all the previous results.

{\bf Key words}: MDS code, Self-dual code, Generalized Reed-Solomon code, Extended generalized Reed-Solomon code
\section{Introduction}

 \quad\; Let $\mathbb{F}_{q}$ be the finite field with $q$ elements, where $q$ is a prime power. A linear code $C$ of length $n$, dimension $k$ and
minimum distance $d$ over $\mathbb{F}_{q}$ is usually called a $q$-ary $[n,k,d]$ code. If the parameters of the code $C$ attach the Singleton bound:
$k+d=n+1$, then $C$ is called a maximum distance separable (MDS) code. MDS codes are widely applied in various occasions due to their nice properties, see
[\ref{BR}, \ref{KKO}, \ref{SR}].

The dual code of a linear code $C$ in $\mathbb{F}_{q}^{n}$, denoted by $C^{\perp}$, is a linear subspace of $\mathbb{F}_{q}^{n}$,
which is orthogonal to $C$. If $C=C^{\perp}$, $C$ is called a self-dual code. Self-dual codes have important applications in coding theory [\ref{Rain}],
cryptography [\ref{CDG}, \ref{DMS}, \ref{MAS}], combinatorics [\ref{BHM}, \ref{MS}] and other related areas.

MDS self-dual codes have good properties due to their optimality with respect to the Singleton bound and their self-duality, which have attracted
a lot of attention in recent years. There are various ways to construct MDS self-dual codes. They mainly are: (1). orthogonal designs, see
[\ref{GK2}, \ref{HK1}, \ref{HK2}]; (2).  building up technique, see [\ref{KL1}, \ref{KL2}]; (3). constacyclic codes, see
[\ref{KZT}, \ref{TW}, \ref{YC}]; (4). (generalized and/or extended) Reed-Solomon codes, see
[\ref{FF}. \ref{GKL}, \ref{JX}, \ref{LLL}, \ref{TW}, \ref{Yan}, \ref{ZF}].

Parameters of MDS self-dual codes are completely characterized by their lengths $n$, that is, $\left[n,\frac{n}{2},\frac{n}{2}+1\right]$.
The problem for constructing different MDS self-dual codes can be transformed to find MDS self-dual codes with
different lengths. In [\ref{GG}], Grassl and Gulliver showed that the problem has been completely solved over the finite fields of characteristic $2$.
But the constructions of MDS self-dual codes on finite fields of odd characteristic are still far from complete. For example, if $q=83^{2}$, more than 3000 MDS self-dual codes with different even lengths possibly exist assuming MDS conjecture  is valid (MDS conjecture says that the length of nontrivial $q$-ary MDS code with $q$ odd prime power, is bounded
by $q+1$). But up to now, only 702 $q$-ary MDS self-dual codes of different even lengths have been constructed. In [\ref{JX}], Jin and Xing
constructed some classes of new MDS self-dual codes through generalized Reed-Solomon codes. In [\ref{Yan}], Yan generalized the technique in [\ref{JX}]
and constructed several classes of MDS self-dual codes via generalized Reed-Solomon codes and extended generalized Reed-Solomon codes. In [\ref{LLL}],
Lebed, Liu and Luo produced more classes of MDS self-dual codes based on [\ref{JX}] and [\ref{Yan}]. All the known results on the systematic constructions of MDS self-dual codes are depicted in Table 1.

\begin{center}
\begin{longtable}{|c|c|c|}  
\caption{Known systematic construction on MDS self-dual codes of length $n$ \\ ($\eta$ is the quadratic character of $\mathbb{F}_{q}$) } \\ \hline
$q$ & $n$ even & Reference\\  \hline
$q$ even  &  $n \leq q$   & [\ref{GG}] \\ \hline
$q$ odd & $n=q+1$ & [\ref{GG}]\\ \hline $q$ odd & $(n-1)|(q -1)$, $\eta(1 - n) = 1$ &   [\ref{Yan}] \\ \hline
$q$ odd & $(n-2)|(q - 1)$, $\eta(2 - n) = 1$ &   [\ref{Yan}]\\ \hline
$q = r^{s}\equiv 3\pmod{4}$ &  $n-1= p^m \mid(q-1)$, prime $p\equiv 3\pmod{4}$ and  $m$ odd &  [\ref{GUE}]\\ \hline
$q = r^{s}$, $r \equiv 1 \pmod{4}$, $s$ odd &  $n-1= p^m\mid (q-1)$, $m$ odd and prime $p \equiv 1\pmod{4}$  &  [\ref{GUE}]\\ \hline
$q = r^{s}$ , $r$ odd, $s\geq 2$ & $n = lr$,  $l$ even and $2l|(r - 1)$ &   [\ref{Yan}] \\ \hline

$q = r^{s}$ , $r$ odd, $s \geq 2$ & $n = lr$,  $l$ even , $(l - 1)|(r - 1)$ and $\eta(1 - l)=1$ &   [\ref{Yan}] \\ \hline

$q = r^{s}$ , $r$ odd, $s \geq 2$ & $n = lr + 1$, $l$ odd , $l|(r - 1)$ and $\eta(l) = 1$  &   [\ref{Yan}] \\ \hline
 $q = r^{s}$ , $r$ odd, $s \geq 2$ & $n = lr + 1$, $l$ odd , $(l - 1)|(r - 1)$ and $\eta(l - 1) = \eta(-1) = 1$ &  [\ref{Yan}] \\ \hline

$q=r^2$  & $n \leq r$  & [\ref{JX}] \\ \hline
$q = r^2, r \equiv 3\pmod{4}$  &  $n= 2tr$ for any $t \leq \frac{r - 1}{2}$ &   [\ref{JX}]\\ \hline

$q = r^2$, $r$ odd & $n = tr$, $t$ even and $1 \leq t \leq r$ &   [\ref{Yan}] \\ \hline

 $q = r^2$, $r$ odd & $n = tr + 1$,  $t$ odd and $1 \leq t \leq r$ &   [\ref{Yan}] \\ \hline

$q  \equiv 1\pmod{4}$ &  $ n|(q - 1), n < q - 1$ &   [\ref{Yan}] \\ \hline
$q \equiv 1\pmod{4}$ &  $4^{n}\cdot n^{2} \leq q$ &  [\ref{JX}]\\ \hline

  $q = p^k $, odd prime $p$ & $n= p^r + 1$, $r|k$ &   [\ref{Yan}] \\ \hline
$q = p^k $, odd prime $p$ & $n= 2p^e$, $1 \leq e < k$, $\eta(-1) = 1$&  [\ref{Yan}] \\ \hline
$q=r^2$, $r$ odd & $n=tm$, $1\leq t \leq \frac{r-1}{\gcd(r-1,m)}$, $\frac{q-1}{m}$ even &  [\ref{LLL}] \\ \hline
$q=r^2$, $r$ odd & $n=tm+1$, $tm$ odd, $1\leq t \leq \frac{r-1}{\gcd(r-1,m)}$ and $m|(q-1)$  & [\ref{LLL}]\\ \hline
$q=r^2$, $r$ odd & $n=tm+2$, $tm$ even, $1\leq t \leq \frac{r-1}{\gcd(r-1,m)}$ and $m|(q-1)$   &   [\ref{LLL}]\\\hline
$q=p^m$, odd prime $p$ & $n= 2tp^e$, $2t \mid (p-1)$ and $e<m$, $\frac{q-1}{2t}$ even &  [\ref{LLL}]\\ \hline
$q=p^{m}$, $m$ even, odd prime $p$ & $n=2tr^l$ with $r=p^s$, $s\mid\frac{m}{2}$, $0\leq l\leq \frac{m}{s}$ and $1\leq t\leq\frac{r-1}{2}$ & [\ref{FF}]\\ \hline
$q=p^{m}$, $m$ even, odd prime $p$ &\tabincell{c}{$n=(2t+1)r^l+1$ with $r=p^s$, $s\mid\frac{m}{2}$, $0\leq l<\frac{m}{s}$ \\and $0\leq t\leq\frac{r-1}{2}$ or $l=\frac{m}{s}$, $t=0$} & [\ref{FF}]\\ \hline
$q=p^m\equiv1\,(\mathrm{mod}\,4)$ & $n= p^l+1$ with $0\leq l\leq m$ &   [\ref{FF}] \\ \hline
\end{longtable}
 \end{center}

Based on [\ref{JX}], [\ref{LLL}] and [\ref{Yan}], we give more constructions of MDS self-dual codes in this paper. Among our constructions, there are
several MDS self-dual codes with new parameters (see Table 2). In particular, for square $q$, we can produce much more MDS self-dual codes than the
previous works.

This paper is organized as follows. In Section 2, we will introduce some basic knowledge and useful results on (extended) generalized Reed-Solomon codes.
In Section 3, we will present our main results on the constructions of MDS self-dual codes. In Section 4, we will make a conclusion.


\begin{center}
\begin{longtable}{|c|c|c|}  
\caption{Our  results} \\\hline
$q$ &   $n$ even  &  Reference \\\hline
$q=r^2$, $r$ odd & $n=tm$, $1\leq t \leq \frac{r+1}{\gcd(r+1,m)}$, $\frac{q-1}{m}$ even & Theorem \ref{thmA1} (i) \\ \hline

$q=r^2$, $r$ odd  &\tabincell{c}{$n=tm+2$, $tm$ even(except when $t$ is even, $m$ is even\\
 and $r\equiv1\,(\mathrm{mod}\,4)$), $1\leq t \leq \frac{r+1}{\gcd(r+1,m)}$ and $m|(q-1)$}   &  Theorem \ref{thmA1} (ii) \\\hline
 $q=r^2$, $r$ odd & $n=tm+1$, $tm$ odd, $2\leq t \leq \frac{r+1}{2\gcd(r+1,m)}$ and $m|(q-1)$  & Theorem \ref{thmA2}\\ \hline
   $q=r^2$, $r$ odd & \tabincell{c}{$n=tm$, $1\leq t \leq \frac{s(r-1)}{\gcd(s(r-1),m)}$, $s$ even, $s|m$,\\ $\frac{r+1}{s}$ even and $\frac{q-1}{m}$ even}  & Theorem \ref{thmC1} (i)\\ \hline
      $q=r^2$, $r$ odd & \tabincell{c}{$n=tm+2$, $1\leq t \leq \frac{s(r-1)}{\gcd(s(r-1),m)}$, $s$ even, $s|m$,\\ $s\mid r+1$ and $m|(q-1)$}  & Theorem \ref{thmC1} (ii)\\ \hline
$q=p^{2s}$, odd prime $p$ & $n= p^{2e}+1$, $1\leq e\leq s$ & Theorem \ref{thmD}\\ \hline
$q=p^{km}$, odd prime $p$ & $n= 2tp^{ke}$, $2t | (p^k-1)$ and $e\leq m-1$,  $\frac{q-1}{2t}$ even & Theorem \ref{thmE}\\ \hline
\end{longtable}
 \end{center}

\section{Preliminaries}
 \quad\; In this section, we introduce some basic notation and useful results on (extended) generalized Reed-Solomon codes
(or (extended) $\mathbf{GRS}$ codes for short). Readers are referred to [\ref{MS}, Chapter 10] for more details.

Let $\mathbb{F}_{q}$ be the finite field with $q$ elements and $n$ be an integer with $1\leq n\leq q$. Choose two $n$-tuples
$\overrightarrow{v}=(v_{1},v_{2},\ldots,v_{n})$ and $\overrightarrow{a}=(\alpha_{1},\alpha_{2},\ldots,\alpha_{n})$, where $v_{i}\in\mathbb{F}_{q}^{*}$,
$1\leq i\leq n$ ($v_{i}$ may not be distinct) and $\alpha_{i}$, $1\leq i\leq n$ are distinct elements in $\mathbb{F}_{q}$. For an integer $k$ with
$1\leq k\leq n$, the $\mathbf{GRS}$ code of length $n$ associated with $\overrightarrow{v}$ and $\overrightarrow{a}$ is defined as follows:
\begin{equation}\label{def GRS}
\mathbf{GRS}_{k}(\overrightarrow{a},\overrightarrow{v})=\{(v_{1}f(\alpha_{1}),\ldots,v_{n}f(\alpha_{n})):f(x)\in\mathbb{F}_{q}[x],\mathrm{deg}(f(x))\leq k-1\}.
\end{equation}
The code $\mathbf{GRS}_{k}(\overrightarrow{a},\overrightarrow{v})$ is a $q$-ary $[n,k]$ MDS code and its dual is also MDS [\ref{MS}, Chapter 11].

We define
\begin{equation*}\label{def GRS}
L_{\overrightarrow{a}}(\alpha_{i})=\prod_{1\leq j\leq n,j\neq i}(\alpha_{i}-\alpha_{j}).
\end{equation*}
Let $\Box_{q}$ denote the set of nonzero squares of $\mathbb{F}_{q}$. The following result is useful in our constructions and it has been shown in [\ref{JX}].

\begin{lemma}([\ref{JX}], Corollary 2.4)\label{y1}
Let $n$ be an even integer and $k=\frac{n}{2}$. If there exists $\lambda\in\mathbb{F}_{q}^{*}$ such that
$\lambda L_{\overrightarrow{a}}(\alpha_{i})\in\Box_{q}$ for all $1\leq i\leq n$, then there exists $\overrightarrow{v}=(v_{1},\ldots,v_{n})$ with
$v_{i}^{2}=\frac{1}{\lambda L_{\overrightarrow{a}}(\alpha_{i})}$ such that the code $\mathbf{GRS}_{k}(\overrightarrow{a},\overrightarrow{v})$ defined
in (1) is an MDS self-dual code of length $n$.
\end{lemma}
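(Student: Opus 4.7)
The plan is to reduce the self-duality condition to a square-extraction problem via the classical dual formula for GRS codes. First, I would recall (or quickly rederive) the standard fact that the dual of $\mathbf{GRS}_{k}(\overrightarrow{a},\overrightarrow{v})$ is again a GRS code on the \emph{same} evaluation points: explicitly, $\mathbf{GRS}_{k}(\overrightarrow{a},\overrightarrow{v})^{\perp}=\mathbf{GRS}_{n-k}(\overrightarrow{a},\overrightarrow{u})$, where $u_{i}=\bigl(v_{i}L_{\overrightarrow{a}}(\alpha_{i})\bigr)^{-1}$. This identity is proved by checking that for any $f$ of degree $\leq k-1$ and $g$ of degree $\leq n-k-1$, the inner product $\sum_{i}v_{i}f(\alpha_{i})\cdot u_{i}g(\alpha_{i})=\sum_{i}\frac{f(\alpha_{i})g(\alpha_{i})}{L_{\overrightarrow{a}}(\alpha_{i})}$ vanishes by Lagrange interpolation applied to the polynomial $fg$ of degree $\leq n-2$.

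Next I would use the uniqueness-up-to-scalar property: two GRS codes $\mathbf{GRS}_{k}(\overrightarrow{a},\overrightarrow{v})$ and $\mathbf{GRS}_{k}(\overrightarrow{a},\overrightarrow{v}')$ with the same evaluation points coincide if and only if there is a single scalar $c\in\mathbb{F}_{q}^{*}$ with $v_{i}'=cv_{i}$ for all $i$, because a common scalar just rescales codewords uniformly while componentwise scalings by distinct factors produce codes containing distinct unit vectors. Taking $k=n/2$, the self-duality requirement $\mathbf{GRS}_{k}(\overrightarrow{a},\overrightarrow{v})=\mathbf{GRS}_{k}(\overrightarrow{a},\overrightarrow{u})$ therefore becomes: there exists $c\in\mathbb{F}_{q}^{*}$ with $v_{i}=cu_{i}=\frac{c}{v_{i}L_{\overrightarrow{a}}(\alpha_{i})}$ for each $i$, i.e.
\[
v_{i}^{2}=\frac{c}{L_{\overrightarrow{a}}(\alpha_{i})}\qquad (1\leq i\leq n).
\]

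Finally I would put $c=\lambda^{-1}$ and interpret the solvability condition. Such $v_{i}\in\mathbb{F}_{q}^{*}$ exist exactly when $\bigl(\lambda L_{\overrightarrow{a}}(\alpha_{i})\bigr)^{-1}\in\Box_{q}$ for every $i$, and this is equivalent to the hypothesis $\lambda L_{\overrightarrow{a}}(\alpha_{i})\in\Box_{q}$ for all $i$ (a nonzero element and its inverse are squares simultaneously). Choosing any square root $v_{i}$ of $\frac{1}{\lambda L_{\overrightarrow{a}}(\alpha_{i})}$ for each $i$ produces the desired vector $\overrightarrow{v}$; since $\mathbf{GRS}_{k}(\overrightarrow{a},\overrightarrow{v})$ is always MDS of dimension $n/2$, the resulting code is an MDS self-dual code of length $n$.

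The main conceptual hurdle is the dual-code identity in the first paragraph; once that is established the rest is purely a matter of matching scalars and recognizing when a given field element is a square, so I would expect most of the writing effort to go into (or into citing) the Lagrange-interpolation step that produces the explicit dual multiplier $u_{i}=(v_{i}L_{\overrightarrow{a}}(\alpha_{i}))^{-1}$.
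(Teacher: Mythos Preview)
The paper does not supply its own proof of this lemma; it is quoted verbatim from [\ref{JX}], Corollary~2.4, and used as a black box. Your argument is correct and is essentially the standard derivation one finds in that reference: compute the dual of a GRS code via Lagrange interpolation, then match multipliers.

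One minor comment: the ``only if'' direction of your uniqueness-up-to-scalar claim is not needed for the lemma as stated, and your parenthetical justification (``componentwise scalings by distinct factors produce codes containing distinct unit vectors'') is vague as written. For the lemma you only need the trivial ``if'' direction: once you have $u_i=(v_iL_{\overrightarrow{a}}(\alpha_i))^{-1}$ and you set $v_i^2=(\lambda L_{\overrightarrow{a}}(\alpha_i))^{-1}$, a one-line computation gives $u_i=\lambda v_i$, whence $\mathbf{GRS}_{k}(\overrightarrow{a},\overrightarrow{u})=\mathbf{GRS}_{k}(\overrightarrow{a},\overrightarrow{v})$ immediately. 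So you can drop the uniqueness discussion entirely and the proof becomes shorter and cleaner.
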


Moreover, extended $\mathbf{GRS}$ code can also be applied to the construction of MDS self-dual codes. For $\overrightarrow{v}=(v_1,\ldots,v_{n-1})$ and
$\overrightarrow{a}=(a_1,\ldots,a_{n-1})$, the extended $\mathbf{GRS}$ code of length $n$ associated with $\overrightarrow{v}$ and $\overrightarrow{a}$
is defined as follows:

\begin{equation}\label{def extended GRS}
\mathbf{GRS}_{k}(\overrightarrow{a},\overrightarrow{v},\infty)=\{(v_{1}f(\alpha_{1}),\ldots,v_{n-1}f(\alpha_{n-1}),f_{k-1}):f(x)\in\mathbb{F}_{q}[x], \mathrm{deg}(f(x))\leq k-1\},
\end{equation}
where $f_{k-1}$ is the coefficient of $x^{k-1}$ in $f(x)$. The code $\mathbf{GRS}_{k}(\overrightarrow{a},\overrightarrow{v},\infty)$ is a
$q$-ary $[n,k]$ MDS code and its dual is also MDS [\ref{MS}, Chapter 11].

We present another two useful results, which have been shown in [\ref{Yan}].

\begin{lemma}([\ref{Yan}], Lemma 2)\label{y2}
Let $n$ be an even integer and $k=\frac{n}{2}$. If $-L_{\overrightarrow{a}}(\alpha_{i})\in\Box_{q}$ for all $1\leq i\leq n-1$,
then there exists $\overrightarrow{v}=(v_{1},\ldots,v_{n})$ with $v_{i}^{2}=-\frac{1}{ L_{\overrightarrow{a}}(\alpha_{i})}$ such that the code
$\mathbf{GRS}_{k}(\overrightarrow{a},\overrightarrow{v},\infty)$ defined in (2) is an MDS self-dual code of length $n$.
\end{lemma}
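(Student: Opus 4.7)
The plan is to show that $\mathbf{GRS}_k(\overrightarrow{a}, \overrightarrow{v}, \infty)$ is self-dual by establishing self-orthogonality; since the code already has dimension $k = n/2$, any code contained in its dual must coincide with its dual. First I would observe that the hypothesis $-L_{\overrightarrow{a}}(\alpha_i) \in \Box_q$ is exactly what is needed for elements $v_i \in \mathbb{F}_q^*$ satisfying $v_i^2 = -1/L_{\overrightarrow{a}}(\alpha_i)$ to exist, so the construction is well-defined.

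Next I would take two arbitrary codewords corresponding to polynomials $f, g \in \mathbb{F}_q[x]$ of degree at most $k-1$ and compute their Euclidean inner product
$$\langle c_f, c_g \rangle \;=\; \sum_{i=1}^{n-1} v_i^2 f(\alpha_i) g(\alpha_i) \;+\; f_{k-1} g_{k-1}.$$
Setting $h(x) = f(x) g(x)$, which has degree at most $n-2 = 2k-2$ and whose coefficient $h_{n-2}$ of $x^{n-2}$ equals $f_{k-1}g_{k-1}$ (vanishing if either $f$ or $g$ has degree strictly less than $k-1$), self-orthogonality reduces to showing
$$\sum_{i=1}^{n-1} v_i^2 \, h(\alpha_i) + h_{n-2} \;=\; 0 \qquad \text{for every } h \in \mathbb{F}_q[x] \text{ with } \deg h \le n-2.$$
By linearity in the coefficients of $h$, it suffices to verify this for $h(x) = x^r$ with $0 \le r \le n-2$; after substituting $v_i^2 = -1/L_{\overrightarrow{a}}(\alpha_i)$ the required family of identities becomes
$$\sum_{i=1}^{n-1} \frac{\alpha_i^r}{L_{\overrightarrow{a}}(\alpha_i)} \;=\; \delta_{r,\,n-2}, \qquad 0 \le r \le n-2.$$

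The heart of the argument is then the classical Lagrange interpolation identity: for any $p \in \mathbb{F}_q[x]$ with $\deg p \le n-2$, interpolation through the $n-1$ distinct nodes $\alpha_1,\ldots,\alpha_{n-1}$ recovers $p$ itself, so reading the coefficient of $x^{n-2}$ in
$$p(x) \;=\; \sum_{i=1}^{n-1} p(\alpha_i) \prod_{j \ne i} \frac{x-\alpha_j}{\alpha_i - \alpha_j}$$
gives $[x^{n-2}]\,p(x) = \sum_{i=1}^{n-1} p(\alpha_i)/L_{\overrightarrow{a}}(\alpha_i)$. Specializing to $p(x) = x^r$ produces exactly $\delta_{r,n-2}$, which closes the argument. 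I do not anticipate a serious obstacle; the only bookkeeping point is the extension coordinate when $\deg f$ or $\deg g$ is less than $k-1$, but that case is handled automatically because $f_{k-1}g_{k-1}$ and $h_{n-2}$ both vanish there.
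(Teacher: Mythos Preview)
Your argument is correct: the Lagrange interpolation identity immediately gives $\sum_{i=1}^{n-1} h(\alpha_i)/L_{\overrightarrow{a}}(\alpha_i) = h_{n-2}$ for any $h$ of degree at most $n-2$, and this is precisely what is needed to show self-orthogonality (hence self-duality, by dimension) once $v_i^2 = -1/L_{\overrightarrow{a}}(\alpha_i)$ is substituted. The bookkeeping about $f_{k-1}g_{k-1} = h_{n-2}$ is handled cleanly.

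There is nothing to compare against in this paper: the lemma is quoted from~[\ref{Yan}] without proof. Your write-up is the standard dual-code computation for extended GRS codes and would serve as a self-contained proof here. One cosmetic point: the lemma as stated in the paper writes $\overrightarrow{v}=(v_1,\ldots,v_n)$, but definition~(2) only uses $v_1,\ldots,v_{n-1}$ (the last coordinate is simply $f_{k-1}$ with implicit multiplier $1$); you have silently, and correctly, taken $v_n=1$.
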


\begin{lemma}([\ref{Yan}], Lemma 3)\label{y3}
Let $m\mid q-1$ be a positive integer and let $\alpha\in\mathbb{F}_{q}$ be a primitive $m$-th root of unity.
Then for any $1\leq i\leq m$, $$\prod_{1\leq j\leq m, j\neq i}\left(\alpha^{i}-\alpha^{j}\right)=m\alpha^{-i}.$$
\end{lemma}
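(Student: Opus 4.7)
The plan is to recognize that the $m$ elements $\alpha, \alpha^2, \ldots, \alpha^m$ are precisely the $m$ distinct roots of $x^m - 1$ in $\mathbb{F}_q$, because $\alpha$ is a primitive $m$-th root of unity. This gives the factorization
\[
x^m - 1 \;=\; \prod_{j=1}^{m}(x-\alpha^j)
\]
in $\mathbb{F}_q[x]$, which will be the starting point of the argument.

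Next I would apply the standard derivative trick used for formulas of this shape. Taking the formal derivative of both sides yields, on the one hand, $mx^{m-1}$, and on the other hand the product-rule expansion
\[
\sum_{i=1}^{m}\prod_{\substack{1\le j\le m\\ j\neq i}}(x-\alpha^j).
\]
Evaluating at $x=\alpha^i$ for a fixed $i$ collapses the sum: every summand indexed by $i'\neq i$ contains the factor $(\alpha^i-\alpha^i)=0$, so only the $i$-th term survives, giving
\[
m(\alpha^i)^{m-1} \;=\; \prod_{\substack{1\le j\le m\\ j\neq i}}(\alpha^i-\alpha^j).
\]

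Finally I would use $\alpha^m=1$ to rewrite $(\alpha^i)^{m-1}=\alpha^{im-i}=\alpha^{-i}$, which delivers the claimed identity $\prod_{j\neq i}(\alpha^i-\alpha^j)=m\alpha^{-i}$. There is no real obstacle here; the only minor point to double-check is that the derivative computation is valid in characteristic $p$, which it is because $m\mid q-1$ forces $\gcd(m,p)=1$, so $mx^{m-1}$ is a nonzero polynomial and the formula on the left-hand side does not collapse. The whole argument is essentially a Lagrange/derivative computation applied to the cyclotomic factorization of $x^m-1$.
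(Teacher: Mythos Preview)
Your proof is correct; the derivative-of-the-cyclotomic-factorization argument is the standard and cleanest way to establish this identity, and your remark that $m\mid q-1$ guarantees $\gcd(m,p)=1$ (so the formal derivative does not vanish) is exactly the right sanity check.

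There is nothing to compare against in the paper itself: the lemma is quoted from~[\ref{Yan}] without proof, so the paper provides no argument of its own. Your proposal supplies what the paper omits, and the approach you take is the canonical one.
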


\section{Main Results}

\quad\; In this section, we will give several new constructions of MDS self-dual codes utilizing the multiplicative group structure of
$\mathbb{F}_{q}^{*}$ and the additive group structure on $\mathbb{F}_{q}$.

Let $g$ be a generator(primitive element) of $\mathbb{F}_{q}^{*}$. Let $S$ and $T$ be two cyclic subgroups of $\F_q^*$.
By the second fundamental theorem of group homomorphism,
\begin{equation}\label{ST}
S\Big{/}\left(S\cap T\right)\simeq \left(S\times T\right)\Big{/}T\leq \mathbb{F}_q^*\Big{/} T.
\end{equation}

\begin{theorem}\label{thmA1}
Let $q=r^{2}$, where $r$ is an odd prime power. Suppose $m\mid q-1$. For $1\leq t\leq \frac{r+1}{\gcd(r+1,m)}$, and $tm$ even,

(i). if $\frac{q-1}{m}$ is even and $n=tm$, then there exists a $q$-ary $[n,\frac{n}{2}]$ MDS self-dual code.

(ii). if $n=tm+2$, then there exists a $q$-ary $[n,\frac{n}{2}]$ MDS self-dual code except the case that $t$ is even, $m$ is even and $r\equiv1\,(\mathrm{mod}\,4)$.
\end{theorem}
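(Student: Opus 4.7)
The plan is to apply Lemma~\ref{y1} for part (i) and Lemma~\ref{y2} for part (ii) to evaluation sets built from cosets of a multiplicative subgroup of $\F_q^*$. Let $\alpha \in \F_q$ be a primitive $m$-th root of unity and $H = \langle \alpha \rangle$. Let $T$ be the cyclic subgroup of order $r+1$ in $\F_q^* = \F_{r^2}^*$ (the norm-$1$ subgroup); by the embedding (\ref{ST}), the image of $T$ in $\F_q^*/H$ has size $(r+1)/\gcd(r+1,m)$, so for any $t \leq (r+1)/\gcd(r+1,m)$ we may pick $\beta_1, \ldots, \beta_t \in T$ representing distinct cosets of $H$. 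Two facts do the heavy lifting: $T \subseteq \Box_q$ (every element has the form $g^{k(r-1)}$ with $r-1$ even) and $\F_r^* \subseteq \Box_q$ (likewise via $g^{k(r+1)}$), and the Frobenius $\sigma\colon x \mapsto x^r$ acts as $\sigma(\beta) = \beta^{-1}$ on $T$.

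For part (i), take $A = \{\beta_i \alpha^j : 1 \leq i \leq t,\ 1 \leq j \leq m\}$ of size $tm = n$. The factorization $\prod_{a \in A}(x-a) = \prod_i (x^m - \beta_i^m)$ gives, by differentiation,
\[
L_A(\beta_i \alpha^{j_0}) \;=\; m\,\beta_i^{m-1}\,\alpha^{-j_0}\,Q_i,\qquad Q_i \;:=\; \prod_{i' \neq i}(\beta_i^m - \beta_{i'}^m).
\]
Since $(q-1)/m$ is even, $\alpha$ (and hence $\alpha^{-j_0}$) is a square, while $m$ and $\beta_i^{m-1}$ are squares by the facts above, so it suffices to show the $Q_i$'s share a common square class. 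Applying $\sigma$ term by term yields the identity $Q_i^{r-1} = (-1)^{t-1}\beta_i^{m(2-t)}/\Gamma$ with $\Gamma = \prod_{i'} \beta_{i'}^m$; the $i$-dependence lives entirely in $T \subseteq \Box_q$, and the parity ``$tm$ even'' forces $Q_i/Q_1 \in \Box_q$. Choosing $\lambda$ so that $\lambda L_A(\beta_1)$ is a square, Lemma~\ref{y1} delivers the self-dual code.

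For part (ii), take $A = \{\beta_i \alpha^j\} \cup \{0\}$ of size $tm+1 = n-1$ and apply Lemma~\ref{y2}. The extra point $0$ contributes a factor $\beta_i \alpha^{j_0}$ that cancels the $\alpha^{-j_0}$ from (i), so no hypothesis on $(q-1)/m$ is needed and $L_A(\beta_i \alpha^{j_0}) = m\,\beta_i^m\,Q_i$; a direct calculation (together with $-1 \in \Box_q$, since $(q-1)/2$ is even for $q = r^2$) also gives $-L_A(0) = \pm\prod_i \beta_i^m \in T \subseteq \Box_q$. Lemma~\ref{y2}'s hypothesis thus reduces to $Q_i \in \Box_q$ for every $i$. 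Raising the identity for $Q_i^{r-1}$ to the $(r+1)/2$-th power and using $\eta^{(r+1)/2} = -1$ for a generator $\eta$ of $T$, one obtains
\[
Q_i^{(q-1)/2} \;=\; (-1)^{(t-1)(r+1)/2 \,+\, m[a_i(2-t) + S]}, \quad \beta_i = \eta^{a_i},\ S = \sum_j a_j.
\]
A case analysis on the parities of $r \bmod 4$, $t$, and $m$---using the freedom to choose the lifts $a_i \in \mathbb{Z}/(r+1)$ of their required residues modulo $(r+1)/\gcd(r+1,m)$---shows this exponent is even in every case except $t$ even, $m$ even, $r \equiv 1 \pmod 4$, which is precisely the exception in the theorem.

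The main obstacle is the parity bookkeeping in part (ii): part (i) only requires matching the square classes of the various $Q_i$'s to each other, whereas (ii) pins down the actual quadratic class of each $Q_i$, and the only remaining freedom is the choice of the exponents $a_i$; the single excepted subcase is precisely where that freedom is exhausted.
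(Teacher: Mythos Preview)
Your proposal is correct and follows essentially the same route as the paper: the evaluation set is a union of $t$ cosets of the order-$m$ subgroup chosen inside the norm-$1$ subgroup $\langle g^{r-1}\rangle$, the quantity you call $Q_i$ is the paper's $u$, the Frobenius identity $Q_i^{r-1}=(-1)^{t-1}\beta_i^{m(2-t)}/\Gamma$ is exactly the paper's computation of $u^{r-1}$, and the part~(ii) case analysis on the parities of $t$, $m$, and $r\bmod 4$ (exploiting the freedom in the lifts $a_i$ to control the parity of $S=\sum a_i$) coincides with the paper's three cases. The only cosmetic difference is that the paper writes down the explicit $\lambda=g^{\frac{r+1}{2}(t-1)-mA}$ in part~(i) rather than invoking the common square class abstractly.
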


\begin{proof}
(i). Put $S=\langle g^{r-1}\rangle$ and $T=\langle g^{\frac{q-1}{m}}\rangle$ in (\ref{ST}). Let $B=\{g^{(r-1)i_{1}},\ldots,g^{(r-1)i_{t}}\}$ be a set of
coset representatives of $(S\times T)/T$ with $0\leq i_{1}<\cdots<i_{t}<r+1$. Denote by $I=\{{i_{1},\ldots,i_{t}}\}$, $A=i_{1}+\cdots+i_{t}$ and
$$\overrightarrow{a}=\left(g^{\frac{k}{m}\cdot(q-1)+(r-1)z}\right)_{0\leq k\leq m-1, z\in I}.$$
Obviously, the entries of $\overrightarrow{a}$ are distinct in $\mathbb{F}_{q}^*$. We will show that there exists
$\overrightarrow{v}\in\left(\mathbb{F}_{q}^{*}\right)^{n}$ such that $\mathbf{GRS}_{\frac{n}{2}}(\overrightarrow{a},\overrightarrow{v})$
is an MDS self-dual code of length $n=tm$.

For any $z \in I$ and $0\leq k\leq m-1$,
\begin{equation}\label{La}
\begin{aligned}
L_{\overrightarrow{a}}(g^{\frac{k}{m}\cdot(q-1)+(r-1)z})=
&{\prod\limits_{0 \leq j \leq m-1, j\neq k}{(g^{\frac{k}{m}\cdot(q-1)+(r-1)z}-g^{\frac{j}{m}\cdot(q-1)+(r-1)z})}}&\\
&\cdot\prod\limits_{l\in I, l\neq z}{\prod\limits_{j=0}^{m-1}{(g^{\frac{k}{m}\cdot(q-1)+(r-1)z}-g^{\frac{j}{m}\cdot(q-1)+(r-1)l})}}.&\\
\end{aligned}
\end{equation}
By Lemma \ref{y3},
\begin{equation}\label{La1}
{\prod\limits_{0 \leq j \leq m-1, j\neq k}{(g^{\frac{k}{m}\cdot(q-1)+(r-1)z}-g^{\frac{j}{m}\cdot(q-1)+(r-1)z})}}=
m\cdot \left(g^{\frac{k}{m}\cdot(q-1)+(r-1)z}\right)^{m-1}.
\end{equation}
Note that $x^m-y^m=\prod\limits_{j=0}^{m-1}{(x-g^{\frac{j}{m}\cdot(q-1)} y)}$. Set $x=g^{\frac{k}{m}\cdot(q-1)+(r-1)z}$ and $y=g^{(r-1)l}$.
It follows that
\begin{equation}\label{La2}
\prod\limits_{j=0}^{m-1}{(g^{\frac{k}{m}\cdot(q-1)+(r-1)z}-g^{\frac{j}{m}\cdot(q-1)+(r-1)l})}=g^{(r-1)zm}-g^{(r-1)lm}.
\end{equation}
By substituting (\ref{La1}) and (\ref{La2}) into (\ref{La}),
\begin{equation*}
\begin{aligned}
L_{\overrightarrow{a}}(g^{\frac{k}{m}\cdot(q-1)+(r-1)z})= m\cdot \left(g^{\frac{k}{m}\cdot(q-1)+(r-1)z}\right)^{m-1}\cdot \prod\limits_{l\in I, l\neq z}{(g^{(r-1)zm}-g^{(r-1)lm})}.
\end{aligned}
\end{equation*}
Let $u=\prod\limits_{l\in I,l\neq z}(g^{(r-1)zm}-g^{(r-1)lm})$. We calculate
\[
\begin{array}{rcl}

u^{r}
 &=&
 {\prod\limits_{l\in I, l\neq z}{(g^{-(r-1)zm}-g^{-(r-1)lm})}}
 = \prod\limits_{l\in I, l\neq z}{g^{-(r-1)(l+z)m}(g^{(r-1)lm}-g^{(r-1)zm})}\\[2mm]
 &=&(-1)^{t-1}\cdot g^{-\left(\sum\limits_{l\in I,l\neq z}l+(t-1)z\right)(r-1)m}\cdot u
 =(-1)^{t-1}\cdot g^{-\left(A+(t-2)z\right)(r-1)m}\cdot u.\\[2mm]
\end{array}
 \]
So $u^{r-1}=(-1)^{t-1}\cdot g^{-\left(A+(t-2)z\right)(r-1)m}$. Since $-1=g^{\frac{r^{2}-1}{2}}$, then
$$u^{r-1}=g^{\frac{r^{2}-1}{2}\cdot(t-1)}\cdot g^{-(r-1)\cdot(A+(t-2)z)m}.$$ It follows that
$$u=g^{\frac{r+1}{2}\cdot(t-1)-(A+(t-2)z)m+i(r+1)}\,\,\mathrm{for\,\,some}\,\,i .$$

It is obvious that $ m\cdot \left(g^{\frac{k}{m}\cdot(q-1)+(r-1)z}\right)^{m-1}\in\Box_{q}$. We take
$\lambda=g^{\frac{r+1}{2}\cdot(t-1)-mA}\in\mathbb{F}_{q}^{*}$. Since $tm$ is even, we obtain that
$\lambda L_{\overrightarrow{a}}(g^{\frac{k}{m}\cdot(q-1)+(r-1)z})\in\Box_{q}$.
Choose $v_{z,k}^{2}=\left(\lambda L_{\overrightarrow{a}}(g^{\frac{k}{m}\cdot(q-1)+(r-1)z})\right)^{-1}$ with $v_{z,k}\in\mathbb{F}_{q}^{*}$.
Define $$\overrightarrow{v}=(v_{i_{1},0},\ldots,v_{i_{1},m-1},\ldots, v_{i_{t},0},\ldots,v_{i_{t},m-1}).$$
By Lemma \ref{y1}, $\mathbf{GRS}_{\frac{n}{2}}(\overrightarrow{a},\overrightarrow{v})$ is an MDS self-dual code.
Therefore, there exists a $q$-ary $[n,\frac{n}{2}]$ MDS self-dual code with length $n=tm$.\\

(ii). As in (i), we let
$$\overrightarrow{a}=\left(0,g^{(r-1)i_{1}}, g^{\frac{1}{m}\cdot(q-1)+(r-1)i_{1}},\ldots,g^{\frac{m-1}{m}\cdot(q-1)+(r-1)i_{1}},\ldots,
g^{(r-1)i_{t}},\ldots,g^{\frac{m-1}{m}\cdot(q-1)+(r-1)i_{t}}\right).$$
We will find $\overrightarrow{v}\in\left(\mathbb{F}_{q}^{*}\right)^{n}$ such that
$\mathbf{GRS}_{\frac{n}{2}}(\overrightarrow{a},\overrightarrow{v},\infty)$ is an MDS self-dual code of length $n=tm+2$.

For any $0\leq j\leq m-1$ and for any $l\in I$, $I=\{i_{1},\cdots,i_{t}\}$,
\begin{equation*}
\begin{aligned}
L_{\overrightarrow{a}}(g^{\frac{k}{m}\cdot(q-1)+(r-1)z})=
&g^{\frac{k}{m}\cdot(q-1)+(r-1)z}\cdot{\prod\limits_{0 \leq j \leq m-1, j\neq k}{(g^{\frac{k}{m}\cdot(q-1)+(r-1)z}-g^{\frac{j}{m}\cdot(q-1)+(r-1)z})}}\cdot&\\
&\prod\limits_{l\in I, l\neq z}{\prod\limits_{j=0}^{m-1}{(g^{\frac{k}{m}\cdot(q-1)+(r-1)z}-g^{\frac{j}{m}\cdot(q-1)+(r-1)l})}}&\\
=& m\cdot g^{(r-1)zm}\cdot \prod\limits_{l\in I, l\neq z}{(g^{(r-1)zm}-g^{(r-1)lm})}&\\
\end{aligned}
\end{equation*}
 and
 \[
 \begin{array}{rcl}
 L_{\overrightarrow{a}}(0)
 &=&\prod\limits_{l\in I}\prod\limits_{j=0}^{m-1}\left(0-g^{\frac{j}{m}\cdot(q-1)+(r-1)l}\right)
=(-1)^{(m+1)t}\cdot\prod\limits_{l\in I}g^{(r-1)lm}=\pm\prod\limits_{l\in I}g^{(r-1)lm}.\\[2mm]
\end{array}
 \]

Denote $u=\prod\limits_{l\in I,l\neq z}(\beta^{zm}-\beta^{lm})$. We obtain $u=g^{\frac{r+1}{2}\cdot(t-1)-(A+(t-2)z)m+i(r+1)}$ for some $i$,
in the same way as (i). The following cases are considered.

$\mathbf{Case \,\,1}$: If $t$ is odd and $m$ is even, we have $\frac{r+1}{2}\cdot(t-1)-(A+(t-2)z)m$ is even. It follows that $u\in\Box_{q}$.

$\mathbf{Case \,\,2}$: If $t$ is even and $r\equiv3\,(\mathrm{mod}\,4)$, we can choose $i_{1},\ldots,i_{t}$ such that $A=i_1+\cdots+i_t$ is even.
It follows that $\frac{r+1}{2}\cdot(t-1)-(A+(t-2)z)m$ is even. Hence $u\in\Box_{q}$.

$\mathbf{Case \,\,3}$: If $t$ is even, $m$ is odd and $r\equiv1\,(\mathrm{mod}\,4)$, we can choose $i_{1},\ldots,i_{t}$ such that $A$ is an odd integer.
It follows that $\frac{r+1}{2}\cdot(t-1)-(A+(t-2)z)m$ is even. Hence $u\in\Box_{q}$.

Note that $g^{r-1},m,-1\in\Box_{q}$. As a result, one always has
$L_{\overrightarrow{a}}(g^{\frac{k}{m}\cdot(q-1)+(r-1)z}),L_{\overrightarrow{a}}(0)\in\Box_{q}$.
It is easy to verify that $-L_{\overrightarrow{a}}(g^{\frac{k}{m}\cdot(q-1)+(r-1)z}),-L_{\overrightarrow{a}}(0)\in\Box_{q}$. We choose
$v_{z,k}^{2}=-\left(L_{\overrightarrow{a}}(g^{\frac{k}{m}\cdot(q-1)+(r-1)z})\right)^{-1}$ and $v_{0}^{2}=-\left(L_{\overrightarrow{a}}(0)\right)^{-1}$,
with $v_{z,k},v_{0}\in\mathbb{F}_{q}^{*}$. Define
$$\overrightarrow{v}=(v_{0},v_{i_{1},0},\ldots,v_{i_{1},m-1},\ldots, v_{i_{t},0},\ldots,v_{i_{t},m-1}).$$
By Lemma \ref{y2}, $\mathbf{GRS}_{\frac{n}{2}}(\overrightarrow{a},\overrightarrow{v},\infty)$ is an MDS self-dual code with length $n=tm+2$,
except the case that $t$ is even, $m$ is even and $r\equiv1\,(\mathrm{mod}\,4)$.
\end{proof}

\begin{example}
Let $r=151$, $q=151^{2}$, $m=6$ and $t=71$. Then $\frac{r+1}{\gcd(r+1,m)}=\frac{152}{2}=76>71=t$. By Theorem \ref{thmA1}, there exists an MDS self-dual
code of length $n=tm=426$. This is a new parameter of MDS self-dual code.
\end{example}

\begin{theorem}\label{thmA2}
Let $q=r^2$, where $r$ is an odd prime power. Suppose $m|(q-1)$. If $1\leq t\leq\frac{r+1}{2\gcd(r+1,m)}$, $tm$ is odd and $n=tm+1$,
then there exists an $[n, \frac{n}{2}]$ MDS self-dual code over $\mathbb{F}_{q}$.
\end{theorem}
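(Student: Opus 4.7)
The plan is to apply Lemma \ref{y2} to an extended generalized Reed-Solomon code, in a set-up parallel to Theorem \ref{thmA1}(i) but \emph{without} adjoining $0$ to the evaluation set, so that the total length becomes $(n-1)+1 = tm+1 = n$. Keeping the notation from Theorem \ref{thmA1}, let $S = \langle g^{r-1}\rangle$ and $T = \langle g^{(q-1)/m}\rangle$, so $|S \cap T| = \gcd(r+1,m) =: d$ and $(S\cdot T)/T$ has exactly $N = (r+1)/d$ cosets. Pick a set $I = \{i_1,\ldots,i_t\} \subseteq \{0,1,\ldots,N-1\}$ of $t$ coset representatives (to be chosen carefully below) and set
$$\overrightarrow{a} = \left(g^{\frac{k}{m}\cdot(q-1) + (r-1)i_j}\right)_{1 \leq j \leq t,\; 0 \leq k \leq m-1}.$$
These $tm$ entries are distinct and nonzero; the $n$-th coordinate of the extended code corresponds to $\infty$.

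I would then transfer verbatim the computation of $L_{\overrightarrow{a}}$ from the proof of Theorem \ref{thmA1}(i), which yields
$$L_{\overrightarrow{a}}\!\left(g^{\frac{k}{m}\cdot(q-1) + (r-1)z}\right) = m\cdot\left(g^{\frac{k}{m}\cdot(q-1) + (r-1)z}\right)^{m-1}\cdot u,$$
where $u = g^{\frac{r+1}{2}(t-1) - (A+(t-2)z)m + i(r+1)}$ for some integer $i$ and $A = \sum_{j=1}^{t} i_j$. By Lemma \ref{y2} it suffices to show that $-L_{\overrightarrow{a}}(\cdot) \in \Box_q$ on every one of the $tm$ evaluation points.

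The central issue is arranging $u \in \Box_q$. Since $q = r^2 \equiv 1\pmod 4$ we have $-1 \in \Box_q$; since $m$ is odd, $m \in \mathbb{F}_r^* \subseteq (\mathbb{F}_q^*)^2$; and since $m-1$ is even the middle factor $(g^{\frac{k}{m}(q-1)+(r-1)z})^{m-1}$ lies in $\Box_q$. Hence $-L_{\overrightarrow{a}} \in \Box_q$ reduces to the parity of the exponent of $u$. Using $t$ odd, $m$ odd, and $r+1$ even, the three summands $\tfrac{r+1}{2}(t-1)$, $(A+(t-2)z)m$, and $i(r+1)$ have parities $0$, $A + z$, and $0$ respectively, so the condition boils down to $A \equiv z \pmod 2$ for every $z \in I$. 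The key observation is that if all elements of $I$ share a common parity $p$, then $A \equiv tp \equiv p \pmod 2$ (because $t$ is odd), and the condition is automatically satisfied.

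The last step is to show that such an $I$ actually exists under the hypothesis $t \leq \frac{r+1}{2\gcd(r+1,m)} = N/2$: the smaller parity class inside $\{0,1,\ldots,N-1\}$ has exactly $\lfloor N/2 \rfloor$ elements, which is $\geq t$ since $t$ is a positive integer with $t \leq N/2$. Once $I$ is fixed, define $v_{z,k} \in \mathbb{F}_q^*$ by $v_{z,k}^2 = -\bigl(L_{\overrightarrow{a}}(g^{\frac{k}{m}(q-1)+(r-1)z})\bigr)^{-1}$; Lemma \ref{y2} then delivers the required $[n,n/2]$ MDS self-dual code. The main obstacle is really just the parity selection of $I$: the evaluation of $L_{\overrightarrow{a}}$ is imported intact from Theorem \ref{thmA1}(i), and the gain of the factor $1/2$ in the bound on $t$ (compared to Theorem \ref{thmA1}(i)) comes precisely from the constraint of selecting half-parity representatives.
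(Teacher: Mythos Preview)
Your proposal is correct and follows essentially the same route as the paper: choose coset representatives $i_j$ of uniform parity, import the computation of $L_{\overrightarrow{a}}$ and the formula for $u$ from Theorem~\ref{thmA1}(i), and reduce the square condition to the parity of $A+z$. The paper fixes the parity to be \emph{even} (it takes all $i_j$ even and observes that $A$ and $z$ are then both even), whereas you allow either common parity and note that $A\equiv tp\equiv p\equiv z\pmod 2$ since $t$ is odd; these are equivalent, and in fact since $m$ is odd one has $N=(r+1)/\gcd(r+1,m)$ even, so both parity classes in $\{0,\dots,N-1\}$ have exactly $N/2\geq t$ elements.
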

\begin{proof}
Put $S=\langle g^{r-1}\rangle$ and $T=\langle g^{\frac{q-1}{m}}\rangle$ in (\ref{ST}). Let $B=\{g^{(r-1)i_{1}},\ldots,g^{(r-1)i_{t}}\}$ be a set of
coset representatives of $(S\times T)/T$ with $0\leq i_{1}<\cdots<i_{t}<r+1$ and
$i_j$ ($1\leq j\leq t$) even. Denote by $I=\{{i_{1},\ldots,i_{t}}\}$, $A=i_{1}+\cdots+i_{t}$ and
$$\overrightarrow{a}=\left(g^{\frac{k}{m}\cdot(q-1)+(r-1)z}\right)_{0\leq k\leq m-1, z\in I}.$$
Obviously, the entries of $\overrightarrow{a}$ are distinct in $\mathbb{F}_{q}^*$.
The main goal is to find $\overrightarrow{v}$ such that $\mathbf{GRS}_{\frac{n}{2}}(\overrightarrow{a},\overrightarrow{v},\infty)$ is an MDS self-dual
code. Similarly as in Theorem \ref{thmA1} (i), for $z=i_j$, $1\leq j\leq t$ and $0\leq k\leq m-1$, we deduce that
\begin{equation*}
\begin{aligned}
L_{\overrightarrow{a}}(g^{\frac{k}{m}\cdot(q-1)+(r-1)z})
=& m\cdot \left(g^{\frac{k}{m}\cdot(q-1)+(r-1)z}\right)^{m-1}\cdot \prod\limits_{l\in I, l\neq z}{(g^{(r-1)zm}-g^{(r-1)lm})}.&\\
\end{aligned}
\end{equation*}
Let $u=\prod\limits_{l\in I,l\neq z}(g^{(r-1)zm}-g^{(r-1)lm})$. We can obtain $u=g^{\frac{r+1}{2}\cdot(t-1)-(A+(t-2)z)m+i(r+1)}$ in the same way as
Theorem \ref{thmA1} (i). From $t$ is odd, $A$ and $z$ are even, it follows that $\frac{r+1}{2}\cdot(t-1)-(A+(t-2)z)m+i(r+1)$ is even which implies
$u\in\Box_{q}$. Since $m$ is odd, it implies that $g^{\frac{q-1}{m}}\in\Box_{q}$. Note that $g^{r-1}, m, -1\in \Box_{q}$.
Therefore, $-L_{\overrightarrow{a}}(g^{\frac{k}{m}\cdot(q-1)+(r-1)z})\in\Box_{q}$. Choose $v_{z,k}^{2}=-\left(L_{\overrightarrow{a}}(g^{\frac{k}{m}\cdot(q-1)+(r-1)z})\right)^{-1}$,
with $v_{z,k}\in\mathbb{F}_{q}^{*}$. Define
$$\overrightarrow{v}=(v_{i_{1},0},\ldots,v_{i_{1},m-1},\ldots, v_{i_{t},0},\ldots,v_{i_{t},m-1}).$$
By Lemma \ref{y2}, $\mathbf{GRS}_{\frac{n}{2}}(\overrightarrow{a},\overrightarrow{v},\infty)$ is an MDS self-dual code with length $n=tm+1$.
\end{proof}

\begin{example}
If $r=151$, $q=151^{2}$, $m=15$ and $t=67$, then $\frac{r+1}{2\gcd(r+1,m)}=76>67=t$. By Theorem \ref{thmA2}, there exists an MDS self-dual code of
length $n=tm+1=1006$. This is a new parameter of MDS self-dual code which has not been covered by previous works.
\end{example}

\begin{theorem}\label{thmC1}
Let $q=r^{2}$, where $r$ is an odd prime power. Let $m\mid q-1$, $s$ even, $s\mid m$ and $s\mid r+1$. For $1\leq t\leq \frac{s(r-1)}{\gcd(s(r-1),m)}$,

(i). if $n=tm$, both $\frac{q-1}{m}$ and $\frac{r+1}{s}$ are even, then there exists a $q$-ary $[n,\frac{n}{2}]$ MDS self-dual code.

(ii). if $n=tm+2$, then there exists a $q$-ary $[n,\frac{n}{2}]$ MDS self-dual code.
\end{theorem}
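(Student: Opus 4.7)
My plan is to follow the template of the proof of Theorem \ref{thmA1}, but to exploit the stronger hypothesis $s\mid r+1$ by choosing a different subgroup $S$ in the isomorphism (\ref{ST}). Since $s\mid r+1$, the cyclic subgroup $S=\langle g^{(r+1)/s}\rangle\leq \mathbb{F}_{q}^{*}$ has order $s(r-1)$. With $T=\langle g^{(q-1)/m}\rangle$ of order $m$ as before, and both subgroups sitting inside the cyclic group $\mathbb{F}_{q}^{*}$, one has $|S\cap T|=\gcd(s(r-1),m)$, so (\ref{ST}) supplies $\tfrac{s(r-1)}{\gcd(s(r-1),m)}$ distinct cosets of $T$ in $ST$. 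Thus for any $1\leq t\leq \tfrac{s(r-1)}{\gcd(s(r-1),m)}$ I would pick coset representatives $\{g^{\frac{r+1}{s}i_{j}}\}_{j=1}^{t}$ with $0\leq i_{1}<\cdots<i_{t}<s(r-1)$, set $I=\{i_{1},\dots,i_{t}\}$, and define
\[
\overrightarrow{a}=\left(g^{\frac{k}{m}(q-1)+\frac{r+1}{s}z}\right)_{z\in I,\,0\leq k\leq m-1}
\]
for part (i), or the same tuple prepended by $0$ for part (ii); the entries are pairwise distinct by the coset condition.

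I would then carry out the same three-step computation as in Theorem \ref{thmA1}, obtaining
\[
L_{\overrightarrow{a}}(g^{\frac{k}{m}(q-1)+\frac{r+1}{s}z})=m\cdot (g^{\frac{k}{m}(q-1)+\frac{r+1}{s}z})^{m-1}\cdot u,\qquad u=\prod_{l\in I,\,l\neq z}(g^{\frac{r+1}{s}zm}-g^{\frac{r+1}{s}lm}).
\]
The critical new observation, and the reason for this choice of $S$, is that $s\mid m$ forces $\frac{r+1}{s}m$ to be an integer multiple of $r+1$. Consequently every $g^{\frac{r+1}{s}lm}$ lies in the order-$(r-1)$ subgroup $\mathbb{F}_{r}^{*}=\langle g^{r+1}\rangle$, so $u\in\mathbb{F}_{r}^{*}$. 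Because $r$ is odd, $r+1$ is even, hence every element of $\langle g^{r+1}\rangle$ is an even power of $g$; thus $\mathbb{F}_{r}^{*}\subseteq \Box_{q}$ and in particular $u\in \Box_{q}$. Moreover $\gcd(m,p)=1$ (as $m\mid q-1$), so $m\in\mathbb{F}_{r}^{*}\subseteq\Box_{q}$; and $q=r^{2}\equiv1\pmod 4$ gives $-1\in\Box_{q}$. This automatic squareness of $u$ is precisely what removes the delicate parity case-splits of Theorem \ref{thmA1}(ii); getting the exponent bookkeeping on $u$ right is the main obstacle, and once it is in place the square/nonsquare analysis becomes essentially automatic.

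For part (i), the hypotheses that $\tfrac{q-1}{m}$ and $\tfrac{r+1}{s}$ are both even say exactly that $g^{(q-1)/m}$ and $g^{(r+1)/s}$ are squares, so each evaluation point $\alpha=g^{\frac{k}{m}(q-1)+\frac{r+1}{s}z}$ is a square and so is $\alpha^{m-1}$. Combined with $m,u\in\Box_{q}$ this gives $L_{\overrightarrow{a}}(\alpha)\in\Box_{q}$, so $\lambda=1$ satisfies Lemma \ref{y1} and yields the desired $[tm,tm/2]$ MDS self-dual code (note $tm$ is even because $s$ is even and $s\mid m$). For part (ii) no such parity assumptions are needed: on the nonzero evaluation points the $\alpha$-factor collapses to $\alpha^{m}=g^{(r+1)zm/s}\in\mathbb{F}_{r}^{*}$, giving $L_{\overrightarrow{a}}(\alpha)=m\cdot g^{(r+1)zm/s}\cdot u\in\Box_{q}$, while a direct calculation yields $L_{\overrightarrow{a}}(0)=\pm\prod_{l\in I}g^{(r+1)lm/s}$, a product of an element of $\mathbb{F}_{r}^{*}$ with $\pm 1$ and hence again in $\Box_{q}$. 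Since $-1\in\Box_{q}$, the hypothesis $-L_{\overrightarrow{a}}(\alpha)\in\Box_{q}$ of Lemma \ref{y2} is satisfied, and the lemma produces a $q$-ary $[tm+2,(tm+2)/2]$ MDS self-dual code.
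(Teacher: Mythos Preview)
Your proposal is correct and follows essentially the same approach as the paper: the same subgroups $S=\langle g^{(r+1)/s}\rangle$, $T=\langle g^{(q-1)/m}\rangle$, the same evaluation set, and the same computation of $L_{\overrightarrow{a}}$. The only cosmetic difference is in justifying $u\in\mathbb{F}_{r}^{*}$: you observe directly that $s\mid m$ makes $\frac{r+1}{s}m$ a multiple of $r+1$, so each $g^{\frac{r+1}{s}lm}\in\langle g^{r+1}\rangle=\mathbb{F}_{r}^{*}$, whereas the paper computes $u^{r}=u$ via the primitive $s$-th root $\xi_{s}=g^{(q-1)/s}$; the two arguments are equivalent.
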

\begin{proof}
(i). In (\ref{ST}), put $S=\langle g^{\frac{r+1}{s}}\rangle$ and $T=\langle g^{\frac{q-1}{m}}\rangle$.
Let $B=\{g^{\frac{r+1}{s}\cdot i_{1}},\cdots,g^{\frac{r+1}{s}\cdot i_{t}}\}$ be a set of coset representatives of $(S\times T)/T$ with $0\leq i_{1}<\cdots<i_{t}<s(r-1)$. Denote by $I=\{{i_{1},\cdots,i_{t}}\}$ and
$$\overrightarrow{a}=\left(g^{\frac{k}{m}\cdot(q-1)+\frac{r+1}{s}\cdot z}\right)_{0\leq k\leq m-1, z\in I}.$$
Obviously, the entries of $\overrightarrow{a}$ are distinct in $\mathbb{F}_{q}^{*}$. We will show that there exists
$\overrightarrow{v}\in\left(\mathbb{F}_{q}^{*}\right)^{n}$ such that $\mathbf{GRS}_{\frac{n}{2}}(\overrightarrow{a},\overrightarrow{v})$
is an MDS self-dual code of length $n=tm$.

Similarly as Theorem \ref{thmA1} (i),
\begin{equation*}
\begin{aligned}
L_{\overrightarrow{a}}(g^{\frac{k}{m}\cdot(q-1)+\frac{r+1}{s}\cdot z})=
&{\prod\limits_{0 \leq j \leq m-1, j\neq k}{(g^{\frac{k}{m}\cdot(q-1)+\frac{r+1}{s}\cdot z}-g^{\frac{j}{m}\cdot(q-1)+\frac{r+1}{s}\cdot z})}}\cdot&\\
&\prod\limits_{l\in I, l\neq z}{\prod\limits_{j=0}^{m-1}{(g^{\frac{k}{m}\cdot(q-1)+\frac{r+1}{s}\cdot z}-g^{\frac{j}{m}\cdot(q-1)+\frac{r+1}{s}\cdot l})}}&\\
=& m\cdot \left(g^{\frac{k}{m}\cdot(q-1)+\frac{r+1}{s}\cdot z}\right)^{m-1}\cdot \prod\limits_{l\in I, l\neq z}{(g^{\frac{r+1}{s}\cdot zm}-g^{\frac{r+1}{s}\cdot lm})}.&\\
\end{aligned}
\end{equation*}
Let $\xi_{s}$ be a primitive $s$-th root of unity of $\F_q^*$. It is obvious that $g^{\frac{r+1}{s}\cdot r}=\xi_{s}\cdot g^{\frac{r+1}{s}}$.
Let $u=\prod\limits_{l\in I, l\neq z}{(g^{\frac{r+1}{s}\cdot zm}-g^{\frac{r+1}{s}\cdot lm})}$. Since $s\mid m$, it follows that $u^{r}=u$, which implies
$u\in\mathbb{F}_{r}^{*}$. If both $\frac{r+1}{s}$ and $\frac{q-1}{m}$ are even, then we obtain
$L_{\overrightarrow{a}}(g^{\frac{k}{m}\cdot(q-1)+\frac{r+1}{s}\cdot z})\in\Box_{q}$.
Choose $v_{z,k}^{2}=\left(L_{\overrightarrow{a}}(g^{\frac{k}{m}\cdot(q-1)+\frac{r+1}{s}\cdot z})\right)^{-1}$ with $v_{z,k}\in\mathbb{F}_{q}^{*}$.
Define $$\overrightarrow{v}=(v_{i_{1},0},\ldots,v_{i_{1},m-1},\ldots, v_{i_{t},0},\ldots,v_{i_{t},m-1}).$$
According to Lemma \ref{y1}, $\mathbf{GRS}_{\frac{n}{2}}(\overrightarrow{a},\overrightarrow{v})$ is an MDS self-dual code with length $n=tm$.\\

(ii). As in (i), we let
$$\overrightarrow{a}=\left(0,g^{\frac{r+1}{s}\cdot i_{1}},g^{\frac{1}{m}\cdot(q-1)+\frac{r+1}{s}\cdot i_{1}},\ldots,g^{\frac{m-1}{m}\cdot(q-1)+\frac{r+1}{s}\cdot i_{1}},\ldots, g^{\frac{r+1}{s}\cdot i_{t}},\ldots,g^{\frac{m-1}{m}\cdot(q-1)+\frac{r+1}{s}\cdot i_{t}}\right).$$
We will find $\overrightarrow{v}\in\left(\mathbb{F}_{q}^{*}\right)^{n}$ such that
$\mathbf{GRS}_{\frac{n}{2}}(\overrightarrow{a},\overrightarrow{v},\infty)$ is an MDS self-dual code of length $n=tm+2$.

For any $0\leq j\leq m-1$ and for any $l\in I=\{i_{1},\ldots,i_{t}\}$, one has

\begin{equation*}
\begin{aligned}
L_{\overrightarrow{a}}(g^{\frac{k}{m}\cdot(q-1)+\frac{r+1}{s}\cdot z})=
&g^{\frac{k}{m}\cdot(q-1)+\frac{r+1}{s}\cdot z}\cdot{\prod\limits_{0 \leq j \leq m-1, j\neq k}{(g^{\frac{k}{m}\cdot(q-1)+\frac{r+1}{s}\cdot z}-g^{\frac{j}{m}\cdot(q-1)+\frac{r+1}{s}\cdot z})}}\cdot&\\
&\prod\limits_{l\in I, l\neq z}{\prod\limits_{j=0}^{m-1}{(g^{\frac{k}{m}\cdot(q-1)+\frac{r+1}{s}\cdot z}-g^{\frac{j}{m}\cdot(q-1)+\frac{r+1}{s}\cdot l})}}&\\
=& m\cdot g^{\frac{r+1}{s}\cdot zm}\cdot \prod\limits_{l\in I, l\neq z}{(g^{\frac{r+1}{s}\cdot zm}-g^{\frac{r+1}{s}\cdot lm})}&\\
\end{aligned}
\end{equation*}
and
 \[
 \begin{array}{rcl}
 L_{\overrightarrow{a}}(0)
 &=&\prod\limits_{l\in I}\prod\limits_{j=0}^{m-1}\left(0-g^{\frac{j}{m}\cdot(q-1)+\frac{r+1}{s}\cdot l}\right)
=(-1)^{(m+1)t}\cdot\prod\limits_{l\in I}g^{\frac{r+1}{s}\cdot lm}=\pm\prod\limits_{l\in I}g^{\frac{r+1}{s}\cdot lm}.\\[2mm]
\end{array}
 \]
From $s\mid m$, it implies $g^{\frac{r+1}{s}\cdot m}\in\mathbb{F}_{r}^{*}$. Therefore,
$L_{\overrightarrow{a}}(g^{\frac{k}{m}\cdot(q-1)+\frac{r+1}{s}\cdot z}),L_{\overrightarrow{a}}(0)\in\mathbb{F}_{r}^{*}\subseteq\Box_{q}$.
Since $q\equiv1\,(\mathrm{mod}\,4)$,
$-L_{\overrightarrow{a}}(g^{\frac{k}{m}\cdot(q-1)+\frac{r+1}{s}\cdot z}),-L_{\overrightarrow{a}}(0)\in\mathbb{F}_{r}^{*}\subseteq\Box_{q}$.
We choose $v_{z,k}^{2}=-\left(L_{\overrightarrow{a}}(g^{\frac{k}{m}\cdot(q-1)+\frac{r+1}{s}\cdot z})\right)^{-1}$ and
$v_{0}^{2}=-\left(L_{\overrightarrow{a}}(0)\right)^{-1}$, with $v_{z,k},v_{0}\in\mathbb{F}_{q}^{*}$. Define
$$\overrightarrow{v}=(v_{0},v_{i_{1},0},\ldots,v_{i_{1},m-1},\ldots, v_{i_{t},0},\ldots,v_{i_{t},m-1}).$$
According to Lemma \ref{y2}, $\mathbf{GRS}_{\frac{n}{2}}(\overrightarrow{a},\overrightarrow{v},\infty)$ is an MDS self-dual code with length $n=tm+2$.
\end{proof}

\begin{example}
If $r=67$, $q=67^2$, $m=12$, $t=31$ and $s=6$, then both $\frac{r+1}{s}$ and $\frac{q-1}{m}$ are even. Note that $\frac{s(r-1)}{\gcd(s(r-1),m)}=33>31=t$.
By Theorem \ref{thmC1}, there exists an MDS self-dual code of length $n=tm=372$. This MDS self-dual code has not been reported in any previous reference.
\end{example}

\begin{theorem}\label{thmD}
Let $q=p^{2s}$, where $p$ is an odd prime and $s$ is a positive integer. There exists a $q$-ary MDS self-dual code of length $p^{2e}+1$,
where $1\leq e\leq s$.

\end{theorem}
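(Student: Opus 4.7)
The length $n = p^{2e}+1$ is even because $p$ is odd, so we are in the regime where Lemma \ref{y2} can apply. Setting $r = p^{e}$, the condition $1 \leq e \leq s$ ensures that $\mathbb{F}_{r^{2}} = \mathbb{F}_{p^{2e}}$ is a subfield of $\mathbb{F}_{q} = \mathbb{F}_{p^{2s}}$, so that $\mathbb{F}_{r^{2}}$ supplies $n-1 = r^{2}$ distinct elements of $\mathbb{F}_{q}$ to use as evaluation points. My plan is to invoke the extended GRS construction (Lemma \ref{y2}) with $\overrightarrow{a} = (\alpha_{1}, \ldots, \alpha_{n-1})$ chosen to be an enumeration of the entire subfield $\mathbb{F}_{r^{2}}$, and with the point at infinity adjoined.

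Next I would compute $L_{\overrightarrow{a}}(\alpha_{i})$ for each $i$. Since the $\alpha_{j}$ with $j \neq i$ run through all elements of $\mathbb{F}_{r^{2}} \setminus \{\alpha_{i}\}$, the shifted differences $\alpha_{i} - \alpha_{j}$ traverse every nonzero element of $\mathbb{F}_{r^{2}}$ exactly once. Hence
\[
L_{\overrightarrow{a}}(\alpha_{i}) \;=\; \prod_{\beta \in \mathbb{F}_{r^{2}}^{*}} \beta \;=\; -1,
\]
by the standard Wilson-type identity that the product of all nonzero elements of any finite field is $-1$. Consequently $-L_{\overrightarrow{a}}(\alpha_{i}) = 1 = 1^{2} \in \Box_{q}$ for every $i$, and the hypothesis of Lemma \ref{y2} is verified trivially.

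Applying Lemma \ref{y2} then produces a vector $\overrightarrow{v} \in (\mathbb{F}_{q}^{*})^{n-1}$ (one may simply take $\overrightarrow{v} = (1, \ldots, 1)$) such that $\mathbf{GRS}_{n/2}(\overrightarrow{a}, \overrightarrow{v}, \infty)$ is an MDS self-dual code of length $n = p^{2e}+1$ over $\mathbb{F}_{q}$, completing the proof. There is essentially no obstacle to overcome: the whole argument hinges on the happy fact that when the evaluation set is a subfield, $L_{\overrightarrow{a}}(\alpha_{i})$ takes the same value $-1$ for every $i$, independent of the chosen point. The only nontrivial input is the subfield inclusion $\mathbb{F}_{p^{2e}} \subseteq \mathbb{F}_{p^{2s}}$, which is precisely the content of $e \leq s$, and the evenness of $n$, which uses that $p$ is odd.
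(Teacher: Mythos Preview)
Your argument has a genuine gap: the assertion that ``the condition $1 \leq e \leq s$ ensures that $\mathbb{F}_{p^{2e}}$ is a subfield of $\mathbb{F}_{p^{2s}}$'' is false. The subfield criterion is $\mathbb{F}_{p^{m}} \subseteq \mathbb{F}_{p^{n}}$ if and only if $m \mid n$, so $\mathbb{F}_{p^{2e}} \subseteq \mathbb{F}_{p^{2s}}$ requires $e \mid s$, not merely $e \leq s$. For instance, with $p=3$, $s=5$, $e=3$ (precisely the parameters of the paper's Example following this theorem), $\mathbb{F}_{3^{6}}$ is \emph{not} a subfield of $\mathbb{F}_{3^{10}}$, and your evaluation set simply does not exist inside $\mathbb{F}_{q}$. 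Hence your proof, while correct when $e \mid s$, establishes the theorem only under that stronger hypothesis.

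The paper's proof avoids this obstruction by not using a subfield at all. It sets $r = p^{s}$ (not $p^{e}$), chooses an arbitrary $e$-dimensional $\mathbb{F}_{p}$-linear subspace $S = \{\alpha_{1},\ldots,\alpha_{p^{e}}\}$ of $\mathbb{F}_{r}$ --- which exists for every $1 \leq e \leq s$ since $\dim_{\mathbb{F}_{p}}\mathbb{F}_{r}=s$ --- picks $\beta \in \mathbb{F}_{q}\setminus\mathbb{F}_{r}$ with $\beta^{r+1}=1$, and takes the $p^{2e}$ evaluation points $a_{k,j}=\alpha_{k}\beta+\alpha_{j}$. Now $L_{\overrightarrow{a}}(a_{k_{0},j_{0}})$ is no longer a constant, and the paper must work: a Frobenius computation gives $u^{r}=\beta^{-(p^{e}-1)^{2}}u$ for the cross-term $u$, whence $u^{(r-1)(r+1)/2}=1$ and $u\in\Box_{q}$. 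Your shortcut is elegant in the divisible case, but it does not extend, and the more involved construction is genuinely required to cover the full range $1\leq e\leq s$.
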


\begin{proof}
Denote by $r=p^{s}$. Let $S=\{\alpha_{1},\alpha_{2},\ldots,\alpha_{p^e}\}$ be an $e$-dimensional $\mathbb{F}_{p}$-vector subspace of $\mathbb{F}_{r}$,
with $1\leq e\leq s$. Choose $\beta\in\mathbb{F}_{q}\backslash\mathbb{F}_{r}$, such that $\beta^{r+1}=1$.
Let $a_{k,j}=\alpha_{k}\beta+\alpha_{j}$, $1\leq k,j\leq p^{e}$ and $\overrightarrow{a}=\left(a_{k,j}:1\leq k,j\leq p^{e}\right)$.
A routine calculation shows that
\begin{equation*}
\begin{aligned}
L_{\overrightarrow{a}}(a_{k_{0},j_{0}})=&\prod\limits_{\substack{1\leq k,j\leq p^{e}\\
(k,j)\neq(k_{0},j_{0})}}\left(a_{k_{0},j_{0}}-a_{k,j}\right)&\\
=&\prod\limits_{\substack{1\leq j\leq p^{e}\\ j\neq j_{0}}}\left(\alpha_{k_{0}}\beta+\alpha_{j_{0}}-\alpha_{k_{0}}\beta-\alpha_{j}\right)\cdot
\prod\limits_{\substack{1\leq k\leq p^{e}\\ k\neq k_{0}}}\left(\alpha_{k_{0}}\beta+\alpha_{j_{0}}-\alpha_{k}\beta-\alpha_{j_{0}}\right)\cdot&\\
&\prod\limits_{\substack{1\leq j\leq p^{e}\\ j\neq j_{0}}}\prod\limits_{\substack{1\leq k\leq p^{e}\\ k\neq k_{0}}}
\left(\alpha_{k_{0}}\beta+\alpha_{j_{0}}-\alpha_{k}\beta-\alpha_{j}\right)&\\
=&\prod\limits_{\substack{1\leq j\leq p^{e}\\ j\neq j_{0}}}\left(\alpha_{j_{0}}-\alpha_{j}\right)\cdot
\prod\limits_{\substack{1\leq k\leq p^{e}\\ k\neq k_{0}}}\left(\left(\alpha_{k_{0}}-\alpha_{k}\right)\beta\right)\cdot
\prod\limits_{\substack{1\leq j\leq p^{e}\\ j\neq j_{0}}}\prod\limits_{\substack{1\leq k\leq p^{e}\\ k\neq k_{0}}}
\left(\left(\alpha_{k_{0}}-\alpha_{k}\right)\beta-\left(\alpha_{j_{0}}-\alpha_{j}\right)\right)&\\
=&\beta^{p^{e}-1}\cdot\prod\limits_{\substack{1\leq j\leq p^{e}\\ j\neq j_{0}}}\left(\alpha_{j_{0}}-\alpha_{j}\right)\cdot
\prod\limits_{\substack{1\leq k\leq p^{e}\\ k\neq k_{0}}}\left(\alpha_{k_{0}}-\alpha_{k}\right)\cdot
\prod\limits_{\substack{1\leq j\leq p^{e}\\ j\neq j_{0}}}\prod\limits_{\substack{1\leq k\leq p^{e}\\ k\neq k_{0}}}
\left(\left(\alpha_{k_{0}}-\alpha_{k}\right)\beta-\left(\alpha_{j_{0}}-\alpha_{j}\right)\right).&
\end{aligned}
\end{equation*}
Since $\alpha_{j_{0}},\alpha_{j},\alpha_{k_{0}},\alpha_{k}\in\mathbb{F}_{r}$ and $\beta\in\Box_{q}$, then
\begin{equation}
\begin{aligned}
\beta^{p^{e}-1}\cdot\prod\limits_{\substack{1\leq j\leq p^{e}\\ j\neq j_{0}}}\left(\alpha_{j_{0}}-\alpha_{j}\right)\cdot
\prod\limits_{\substack{1\leq k\leq p^{e}\\ k\neq k_{0}}}\left(\alpha_{k_{0}}-\alpha_{k}\right)\in\Box_{q}.
\end{aligned}
\end{equation}
Let $u=\prod\limits_{\substack{1\leq j\leq p^{e}\\ j\neq j_{0}}}\prod\limits_{\substack{1\leq k\leq p^{e}\\ k\neq k_{0}}}
\left(\left(\alpha_{k_{0}}-\alpha_{k}\right)\beta-\left(\alpha_{j_{0}}-\alpha_{j}\right)\right)$. Note that
\begin{equation*}
\begin{aligned}
u^{r}=&\prod\limits_{\substack{1\leq j\leq p^{e}\\ j\neq j_{0}}}\prod\limits_{\substack{1\leq k\leq p^{e}\\ k\neq k_{0}}}
\left(\left(\alpha_{k_{0}}-\alpha_{k}\right)\beta^{-1}-\left(\alpha_{j_{0}}-\alpha_{j}\right)\right)&\\
=&(-\beta)^{-(p^{e}-1)^{2}}\cdot\prod\limits_{\substack{1\leq j\leq p^{e}\\ j\neq j_{0}}}\prod\limits_{\substack{1\leq k\leq p^{e}\\ k\neq k_{0}}}
\left(\left(\alpha_{j_{0}}-\alpha_{j}\right)\beta-\left(\alpha_{k_{0}}-\alpha_{k}\right)\right)&\\
=&\beta^{-(p^{e}-1)^{2}}\cdot u.&
\end{aligned}
\end{equation*}
This implies $u^{r-1}=\beta^{-(p^{e}-1)^{2}}$. By $\beta^{r+1}=1$ and $p^{e}-1$ is even, we deduce $u^{(r-1)\cdot\frac{r+1}{2}}=1$,
which yields $u\in\Box_{q}$. By (3), it follows that $L_{\overrightarrow{a}}(a_{k_{0},j_{0}})\in\Box_{q}$.

From $q=r^{2}\equiv1\,(\mathrm{mod}\,4)$, one has $-1\in\Box_{q}$, which implies $-L_{\overrightarrow{a}}(a_{i_{0},j_{0}})\in\Box_{q}$.
We choose $v_{k_{0},j_{0}}^{2}=-\frac{1}{L_{\overrightarrow{a}}(a_{k_{0},j_{0}})}$ with $v_{k_{0},j_{0}}\in\mathbb{F}_{q}^{*}$ and
define $\overrightarrow{v}=(v_{k,j}:1\leq k,j\leq p^{e})$. By Lemma \ref{y2}, $\mathbf{GRS}_{\frac{n}{2}}(\overrightarrow{a},\overrightarrow{v},\infty)$
is an MDS self-dual code of length $p^{2e}+1$.
\end{proof}

\begin{example}\label{ex}
Let $p=3$, $s=5$ and $q=p^{2s}=243^{2}$. We can choose $e=3<5=s$. By Theorem \ref{thmD}, there exists an MDS self-dual code of length
$n=p^{2e}+1=3^{6}+1=730>\sqrt{q}$. The length of this MDS self-dual code is different from all the previous results.
\end{example}

\begin{remark}
In the works [\ref{LLL}] and [\ref{Yan}], any MDS self-dual code with the length of the form $n=tm+1$ satisfies one of three following conditions:
\begin{item}
\item{(1)}. $t=\sqrt{q}$ or $m=\sqrt{q}$, see Theorem 2 (ii), Theorem 3 (i) and (iii) in [\ref{Yan}];
\item{(2)}. $t\mid q-1$ or $m\mid q-1$, see Theorem 2 in [\ref{LLL}];
\item{(3)}. $tm=p^{c}$, $q=p^{k}$ and $c\mid k$, see Theorem 4 (i) in [\ref{Yan}].
\end{item}\\
The class of codes in Theorem \ref{thmD} is not included in the three cases. So it can produce new MDS self-dual codes.
\end{remark}

 \begin{theorem}\label{thmE}
 Let $q=p^{km}$ with $p$ odd prime. For any $t$ with $2t\mid(p^{k}-1)$ and $e\leq m-1$, if $\frac{q-1}{2t}$ is even,
there exists a $q$-ary MDS self-dual code with length $2tp^{ke}$.
 \end{theorem}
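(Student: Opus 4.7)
The plan is to build the evaluation set $\overrightarrow{a}$ as the union of $2t$ translates of an $\mathbb{F}_{p^k}$-subspace of dimension $e$, where the translations run through the $2t$-th roots of unity, and then exploit the $\mathbb{F}_{p^k}$-linearity of the associated linearized polynomial to evaluate $L_{\overrightarrow{a}}$ in closed form. Since $2t\mid p^k-1\mid q-1$, let $\eta=g^{(q-1)/(2t)}\in\mathbb{F}_{p^k}^*$ be a primitive $2t$-th root of unity. Viewing $\mathbb{F}_q$ as an $m$-dimensional $\mathbb{F}_{p^k}$-vector space, the hypothesis $e\leq m-1$ allows me to pick an $\mathbb{F}_{p^k}$-subspace $H\subseteq\mathbb{F}_q$ of dimension $e$ with $H\cap\mathbb{F}_{p^k}=\{0\}$ (take any $\mathbb{F}_{p^k}$-complement of the line $\mathbb{F}_{p^k}$ in $\mathbb{F}_q$ and pick an $e$-dimensional subspace of it). Set
$$\overrightarrow{a}=\bigl(\eta^i+h\bigr)_{1\leq i\leq 2t,\,h\in H}.$$
This has $n=2tp^{ke}$ entries, pairwise distinct because for $i\neq j\pmod{2t}$ the difference $\eta^i-\eta^j$ is a nonzero element of $\mathbb{F}_{p^k}$ and hence lies outside $H$.

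Next I would compute $L_{\overrightarrow{a}}(\eta^i+h)$ by splitting the defining product according to whether the second index equals $i$. After the substitution $h''=h-h'$, the $j=i$ contribution collapses to the constant $c:=\prod_{h''\in H\setminus\{0\}}h''$, independent of $(i,h)$. For each $j\neq i$, the inner product over $h'\in H$ equals $L_H(\eta^i-\eta^j)$, where $L_H(x):=\prod_{h\in H}(x-h)$ and the substitution uses $-H=H$. Because $H$ is an $\mathbb{F}_{p^k}$-subspace of a field of characteristic $p$, $L_H$ is a $p^k$-linearized polynomial, so $L_H(ax+by)=aL_H(x)+bL_H(y)$ for all $a,b\in\mathbb{F}_{p^k}$; applied to $\eta^i,\eta^j\in\mathbb{F}_{p^k}$ this gives $L_H(\eta^i-\eta^j)=(\eta^i-\eta^j)L_H(1)$. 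Combining with Lemma \ref{y3} (applied with $m$ replaced by $2t$ and $\alpha$ by $\eta$), one obtains
$$L_{\overrightarrow{a}}(\eta^i+h)=c\cdot L_H(1)^{2t-1}\cdot 2t\cdot\eta^{-i}.$$

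The only factor depending on $(i,h)$ is $\eta^{-i}$. Since $\frac{q-1}{2t}$ is even by hypothesis, $\eta=g^{(q-1)/(2t)}$ is a square in $\mathbb{F}_q^*$, so every $\eta^{-i}\in\Box_q$. Taking $\lambda=\bigl(c\cdot L_H(1)^{2t-1}\cdot 2t\bigr)^{-1}\in\mathbb{F}_q^*$ forces $\lambda L_{\overrightarrow{a}}(\eta^i+h)=\eta^{-i}\in\Box_q$ for all $(i,h)$, and Lemma \ref{y1} delivers $\overrightarrow{v}\in(\mathbb{F}_q^*)^n$ with $\mathbf{GRS}_{n/2}(\overrightarrow{a},\overrightarrow{v})$ an MDS self-dual code of length $2tp^{ke}$. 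The main technical point is the dual role of the condition $H\cap\mathbb{F}_{p^k}=\{0\}$: it simultaneously ensures distinctness of the evaluation points (since $\eta^i-\eta^j\in\mathbb{F}_{p^k}\setminus H$) and enables the clean linearity identity $L_H(\eta^i-\eta^j)=(\eta^i-\eta^j)L_H(1)$, and it is precisely the hypothesis $e\leq m-1$ that makes such an $H$ available.
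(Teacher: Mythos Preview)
Your proof is correct and follows essentially the same construction as the paper: both take the evaluation set to be $2t$ translates of an $e$-dimensional $\mathbb{F}_{p^k}$-subspace (disjoint from $\mathbb{F}_{p^k}$) by the $2t$-th roots of unity in $\mathbb{F}_{p^k}$, compute $L_{\overrightarrow{a}}$ up to a constant times a power of the root of unity, and conclude via Lemma~\ref{y1} since that root is a square. Your organization via the $p^k$-linearized polynomial $L_H$ and Lemma~\ref{y3} is a bit cleaner than the paper's direct manipulation (which yields the exponent $\eta^{-ip^{ke}}$ instead of your $\eta^{-i}$), but the argument is the same.
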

 \begin{proof}
 Let $V$ be an $e$-dimensional $\mathbb{F}_{p^k}$-vector subspace in $\mathbb{F}_{q}$ with $V\cap\mathbb{F}_{p^k}=0$.
Let $\omega\in\mathbb{F}_{p^k}$ be a primitive element of order $2t$. Choose $\overrightarrow{a}=\bigcup\limits_{j=0}^{2t-1}(\omega^{j}+V)$.
For any $b\in\omega^{i}+V$,
\begin{equation*}
\begin{aligned}
L_{\overrightarrow{a}}(b)&=\left(\prod_{0\neq u\in V}u\right)\cdot\left(\prod_{j=0,j\neq i}^{2t-1}\prod_{u\in V}(\omega^{i}-\omega^{j}+u)\right)&\\
&=\left(\prod_{0\neq u\in V}u\right)\cdot\left(\prod_{u\in V}\omega^{i(2t-1)}\prod_{h=1}^{2t-1}\left(1+\omega^{-i}u-\omega^{h}\right)\right)&\\
&=\omega^{-ip^{ke}}\cdot\left(\prod_{0\neq u\in V}u\right)\cdot\left(\prod_{u\in V}\prod_{h=1}^{2t-1}(1+u-\omega^{h})\right)&
\end{aligned}
\end{equation*}
where the last equality follows from that $\prod\limits_{u\in V}\omega^{i(2t-1)}=\omega^{-ip^{ke}}$ and $\omega^{-i}u$ runs through $V$
when $u$ runs through $V$.

Let $c=\left(\prod\limits_{0\neq u\in V}u\right)\cdot\left(\prod\limits_{u\in V}\prod\limits_{h=1}^{2t-1}(1+u-\omega^{h})\right)$.
It follows that $L_{\overrightarrow{a}}(b)=\omega^{-ip^{ke}}\cdot c$. Note that $\omega\in\Box_{q}$, since $\frac{q-1}{2t}$ is even. We can choose
$\lambda=c$, which is independent of $b$. Let $v_{b}^{2}=(\lambda L_{\overrightarrow{a}}(b))^{-1}$, with $v_{b}\in\mathbb{F}_{q}^{*}$ and
define $\overrightarrow{v}=(v_{b}:b\in\omega^{i}+V)$. By Lemma \ref{y1}, $\mathbf{GRS}_{\frac{n}{2}}(\overrightarrow{a},\overrightarrow{v})$
is an MDS self-dual code with length $2tp^{ke}$.
 \end{proof}

\begin{example}
Let $p=5$, $k=3$, $m=9$ and $q=p^{km}=5^{27}$. We can choose $t=31$ and $e=7$. It is easy to verify that $2t\mid p^{k}-1$, $e\leq(m-1)k$ and
$\frac{q-1}{2t}$ is even. By Theorem \ref{thmE}, there exists an MDS self-dual code of length $n=2tp^{e}=62\times5^{21}$.
This code has not been reported in any previous work.
\end{example}

\begin{example}
For $q=151^{2}$, we can construct $862$ different $n$ for which MDS self-dual codes of length $n$ by using all the previous results (in Table 1).
Utilizing the results in this paper (Theorems \ref{thmA1}-\ref{thmE}), we can construct $1228$ MDS self-dual codes of different
lengths. Usually, for large $q$ being square of odd prime power, we can produce much more MDS self-dual codes over $\mathbb{F}_q$ than the total of previous results.
\end{example}

\section{Conclusion}

\quad\; Based on the technique in [\ref{JX}], [\ref{LLL}] and [\ref{Yan}] and applying the second fundamental theorem of group homomorphism on
different multiplicative subgroups of $\mathbb{F}_{q}^{*}$, we construct several new classes of MDS self-dual codes over finite fields of
odd characteristic via generalized Reed-Solomon codes and extended generalized Reed-Solomon codes. For a fixed odd prime power $q$ and any even
$n\leq q+1$, utilizing $\mathbf{GRS}$ codes and extended $\mathbf{GRS}$ codes, we hope to construct MDS self-dual code with length $n$. So the number of
$q$-ary MDS self dual codes with different lengths is expected to be $\frac{q+1}{2}$ except that $q\equiv 3\pmod{4}$ and $n\equiv 2\pmod{4}$ (in this case, there does not exist MDS self-dual codes, see [\ref{ZF}]). However, the total number of MDS self-dual codes in all known
results is much less than $\frac{q+1}{2}$. Therefore, much more MDS self-dual codes over finite fields of odd characteristic are yet to be explored.

\section*{Acknowledgements}
{The authors thank the editor and anonymous referees for their work to improve the readability of this paper.  This work is partially supported by National Natural Science Foundation of China(NSFC) under Grant 11471008(J.Luo) and Grant 11871025(H.Liu) and also supported by the self-determined research funds of
CCNU from the self-determined research funds of CCNU from the colleges' basic research and operation of MOE(Grant No. CCNU18TS028).
}

\end{document}